\newcommand{\be}{\begin{equation}}
\newcommand{\ee}{\end{equation}}
\newcommand{\bea}{\begin{eqnarray}}
\newcommand{\eea}{\end{eqnarray}}
\def\squareforqed{\hbox{\rlap{$\sqcap$}$\sqcup$}}
\def\qed{\ifmmode\squareforqed\else{\unskip\nobreak\hfil
\penalty50\hskip1em\null\nobreak\hfil\squareforqed
\parfillskip=0pt\finalhyphendemerits=0\endgraf}\fi}
\def\endenv{\ifmmode\;\else{\unskip\nobreak\hfil
\penalty50\hskip1em\null\nobreak\hfil\;
\parfillskip=0pt\finalhyphendemerits=0\endgraf}\fi}
\newcommand{\tr}{\text{Tr}}
\newcommand{\ket}[1]{|#1\rangle}
\newcommand{\bra}[1]{\langle#1|}
\newtheorem*{rep@theorem}{\rep@title}
\newcommand{\newreptheorem}[2]{%
\newenvironment{rep#1}[1]{%
 \def\rep@title{#2 \ref{##1}}%
 \begin{rep@theorem}}%
 {\end{rep@theorem}}}
\newtheorem{thm}{Theorem}
\newtheorem{lemma}{Lemma}
\newtheorem{sketch of proof}{Sketch of proof}
\begin{document}


\title{Maximally $\psi-$epistemic models cannot explain gambling with two qubits}


\author{Sagnik Ray}
\affiliation{School of Physics, Indian Institute of Science Education and Research Thiruvananthapuram, Kerala 695551, India}

\author{Anubhav Chaturvedi}

\affiliation{Faculty of Applied Physics and Mathematics, Gda\'nsk University of Technology, Gabriela Narutowicza 11/12, 80-233 Gda\'nsk, Poland}
\affiliation{International Centre for Theory of Quantum Technologies (ICTQT), University of Gda\'nsk, 80-308 Gda\'nsk, Poland}
\author{Debashis Saha}
\affiliation{Department of Physics, School of Basic Sciences, Indian Institute of Technology Bhubaneswar, Odisha 752050, India}


\begin{abstract}
We investigate the minimal proof for ruling out maximally $\psi-$epistemic interpretations of quantum theory, in which the indistinguishable nature of two quantum states is fully explained by the \textit{epistemic overlap} of their corresponding distributions over ontic states. To this end, we extend the standard notion of epistemic overlap by considering a penalized distinguishability game involving two states and three possible answers, named as \textit{Quantum Gambling}. In this context, using only two pure states and their equal mixture, we present an experimentally robust no-go theorem for maximally $\psi-$epistemic models, showing that qubit states achieve the maximal difference between quantum and epistemic overlaps.

\end{abstract}

\maketitle

\textit{Introduction---} It has been a century since the formalization of quantum theory, and still some of its foundational aspects have remained contentious. The answer to what the quantum state means has been debated since the 1927 Solvay Conference \cite{solvay,epr,Fine2020EPR}. Barring relational \cite{Rovelli2025Relational}, operationalist, and other Copenhagenish \cite{schmid2025} approaches to quantum theory, one must contend with the fact that any system must be described by a set of attributes which are intrinsic to the system and observer-independent, called the \textit{ontic state}. This notion, formalized in the ontological models framework \cite{Harrigan_2010} has been used to probe whether quantum states uniquely encode these ontic states, that is, if the quantum states are \textit{ontic} in nature. While Pusey, Barrett, and Rudolph demonstrated in \cite{PBR} that they are, they did so in the context of composite systems by assuming the ontic states of independently prepared systems to be uncorrelated. Explicit examples where quantum states are treated as mere representations of knowledge about the system \cite{Lewis,Aaronson}, that is, are \textit{epistemic} in nature, exist when one discards such auxiliary assumptions as made in \cite{PBR}. It is possible, however, to heavily constrain such models where the quantum state is epistemic-- called $\psi-$epistemic models, without any auxiliary assumptions, as has been shown in \cite{Barrett,Leifer,Branciard}. The question we are interested in, is not whether one can constrain such $\psi-$epistemic models, which has already been demonstrated in the aforementioned works, but rather what the minimum requirements are in terms of the dimension of the quantum preparations and the number of such preparations to constrain such models meaningfully.


In this letter, we specifically focus on single-qubit systems. We do so because, firstly, they are the lowest-dimensional non-trivial system, and secondly because they have, so far, been left untouched in these prior works due to the existence of a \textit{maximally} $\psi-$epistemic model proposed by Kochen and Specker \cite{KS}. A maximally $\psi-$epistemic model is one where the \textit{epistemic overlap}---which is the overlap between two distributions corresponding to two quantum states over the ontic state space--- completely accounts for the \textit{quantum overlap}--- which is a measure of operational indistinguishability of those two quantum states. This eliminates the need for quantum theory to calculate their operational indistinguishability. Since notions of epistemic and quantum overlaps which are derived from operational distinguishability have been unable to constrain $\psi-$epistemic models for qubits, we need a quantum task with two inputs but at least three outputs instead of two (as is the case with distinguishability) to try to constrain such models, and that is what has been done in this work, named in a rather tongue-in-cheek way as \textit{Quantum Gambling}. 

By generalizing the epistemic and quantum overlaps in terms of the success metric of this game, we show that \textit{no} maximally $\psi-$epistemic model can explain Quantum Gambling with qubits. Thus, the Kochen-Specker model, which was able to explain quantum statistics emerging out of projective measurements on qubits, fails to explain statistics when non-projective Positive Operator Valued Measurements (POVMs) are employed. What is remarkable is the fact that just \textit{two} pure qubit states are sufficient to show this, which is something that, to our knowledge, has not been achieved before. The gap between the overlaps is bounded by an operational quantity, which means that it is robustly testable in an experimental setting. 

While formulated without any explicit reference to qubit states, the maximal value of the bound, retrieved via a specifically developed tracial noncommuting polynomial optimization scheme \cite{Tavakoli2022informationally,Chaturvedi2021characterising}, is saturated by a pair of them. The minimal requirement for constraining $\psi-$epistemic models---in terms of the number of states and their dimension both turn out to be two.
Finally, we emphasize that our result not only constrains $\psi-$epistemic models, but also demonstrates that one can achieve quantum advantage in an appropriate constrained communication scenario with two qubit messages, if one replaces bounded distinguishability or antidistinguishability \cite{Manna_2024,pandit2025} of the messages with bounded reward in a game of gambling as the constraint. 

In the following paragraphs, we introduce Quantum Gambling as a game being played between our old friends Alice and Bob, and from there onwards, we move on to a recap of ontological models and the results.

\textit{Quantum Gambling---} Alice decides to play a game of chance with Bob. They both walk inside Bob's laboratory, where he has prepared multiple copies of two quantum states, $\ket{\psi_1}$ and $\ket{\psi_2}$. Bob randomly presents one of the states to Alice, and she needs to guess which one it is. If she guesses correctly, she gets a unit reward, and if she guesses wrong, she gets a penalty of value $-\beta$, where $\beta \in [0,1]$. She also has a third option, where, regardless of the state, if she answers '$3$', she receives a reward of value $\alpha$, where $\alpha \in [0,1]$. Alice, ever the physicist, optimizes over every possible three-outcome measurement $\{\mathcal{M}\}$ to calculate the \textit{best average reward} she can win in this game, 
\begin{subequations} \label{S_Q}
    \begin{align}
        S^{Gam}_{Q}=&\frac{1}{2}\max_{\{\mathcal{M}\}}\Big(p(1|\psi_1,\mathcal{M})-\beta p(2|\psi_1,\mathcal{M})+\alpha p(3|\psi_1,\mathcal{M}) \notag \\
       &+ p(2|\psi_2,\mathcal{M})-\beta p(1|\psi_2,\mathcal{M})+\alpha p(3|\psi_2,\mathcal{M})\Big) \label{S_M form 1}\\
       =&\frac{1}{2}\max_{\{\mathcal{M}\}}\Big((1+\beta)\big(p(1|\psi_1,\mathcal{M})+p(2|\psi_2,\mathcal{M})\big) \notag \\
    &+(\alpha+\beta)\big(p(3|\psi_1,\mathcal{M})+p(3|\psi_2,\mathcal{M})\big)\Big)-\beta, \label{S_M form 2}
    \end{align}
\end{subequations}
where \eqref{S_M form 2} follows from \eqref{S_M form 1} since $\sum_{i=1}^3p(i|\psi_j,\mathcal{M})=1$ for $j=1,2$. Henceforth in this letter, we shall denote the best average reward as $S^{Gam}_Q(\psi_1,\psi_2;\alpha,\beta)$ to make the dependence on the reward and penalty explicit. A closed-form expression for $S^{Gam}_Q(\psi_1,\psi_2;\alpha,\beta)$ is not available in general, thus, for a given pair of states and a given value of $\alpha$ and $\beta$, the value of $S^{Gam}_Q(\psi_1,\psi_2;\alpha,\beta)$ is obtained by solving a straightforward SDP (Semi-Definite Program) which satisfies Slater's strong duality condition, implying that it yields optimal value up to machine precision. This game, it turns out, is valuable for constraining $\psi$-epistemic models. We shall shortly see how, after a brief recap of ontological models.

\textit{Ontological models and overlaps of quantum states---} Quantum theory is an \textit{operational theory} insofar as its main goal is to predict measurement statistics in a lab-setting, and it refrains from saying anything about how the measurement statistics emerge from the underlying attributes of the system being measured. In an ontological model \cite{Harrigan_2010}, we call the complete specification of the 'underlying attributes' an \textit{ontic state}, $\lambda$, and it belongs to a measurable set $\Lambda$ referred to as the \textit{ontic state space}. Every quantum state $\ket{\psi}$ corresponds to a probability distribution over $\Lambda$ called the \textit{epistemic state} $\{\mu(\lambda|\psi)\}$ such that $\mu(\lambda|\psi)\geq0$ $\forall \lambda\in\Lambda$ and $\int_{\Lambda}\mu(\lambda|\psi)d\lambda=1$. For every $n-$outcome measurement $\mathcal{M}=\{M_k\}_{k=1}^n$, one can define a \textit{response scheme}, $\{\xi(k|\lambda,\mathcal{M})\}_{k=1}^n$ such that $\xi(k|\lambda,\mathcal{M})\geq0 \quad \forall k$ and $\sum_k \xi(k|\lambda,\mathcal{M})=1 \quad \forall \lambda,\mathcal{M}$. Ultimately, they must reproduce the measurement statistics predicted by quantum theory, such that, 
\begin{equation}
    p(k|\psi,\mathcal{M})=\bra{\psi}M_k\ket{\psi}=\int_{\Lambda}\mu(\lambda|\psi)\xi(k|\lambda,\mathcal{M})d\lambda.
\end{equation}   

Classification of such ontological models into $\psi-$\textit{ontic}, \textit{maximally} $\psi-$\textit{epistemic} and \textit{non-maximally} $\psi-$\textit{epistemic} is generally done on the basis of how well the overlap between two epistemic states-- called the \textit{epistemic overlap}--of non-orthogonal quantum preparations is in accounting for their operational indistinguishability. Distinguishability of two quantum states $\ket{\psi_1}$ and $\ket{\psi_2}$ with weights $w_1>0$ and $w_2>0$ is defined as,
\begin{equation} \label{weighted_dist}
    D_Q(\{\psi_1,w_1\},\{\psi_2,w_2\})=\max_{\{\mathcal{M}\}} \sum_{i=1}^2 w_ip(i|\psi_i,\mathcal{M}) ,
\end{equation}
where, for the case in which $w_1=w_2=1/2$, we shall simply use the notation $D_Q(\psi_1,\psi_2)$. The amount of indistinguishability is captured by the \textit{quantum overlap},
\begin{equation}
    \omega_Q(\psi_1,\psi_2)=2(1-D_Q(\psi_1,\psi_2)).
\end{equation}
If one can construct an ontological model where the epistemic overlap,
\begin{equation}
   \omega_{\Lambda}(\psi_1,\psi_2)=\int_{\Lambda}\min(\mu(\lambda|\psi_1),\mu(\lambda|\psi_2))d\lambda, 
\end{equation}
is equal to $\omega_Q(\psi_1,\psi_2)$, then that model is called \textit{maximally $\psi-$epistemic}. If $\omega_{\Lambda}(\psi_1,\psi_2)$ falls short of completely accounting for $\omega_{Q}(\psi_1,\psi_2)$ then we shall call that model \textit{non-maximally $\psi-$epistemic}. Finally, if for two non-orthogonal quantum states, $\omega_{\Lambda}(\psi_1,\psi_2)=0$, then we shall call it \textit{$\psi-$ontic}. As mentioned in the introduction, successive works by Barrett et al. \cite{Barrett}, Leifer \cite{Leifer}, and Branciard \cite{Branciard} have claimed that there exist quantum states where $\psi-$epistemic models become asymptotically bad at explaining their quantum overlaps; however, they have all left single-qubit systems untouched in their analyses. This is primarily because of the existence of a maximally $\psi-$epistemic model for pure qubit states with projective measurements, proposed by Kochen and Specker \cite{KS}. Moreover, it is shown that any two-outcome POVM can be simulated using projective measurements \cite{Oszmaniec-prl}. Therefore, to probe the nonexistence of a maximally $\psi$-epistemic model, even for pure qubit states, we must consider an operational entity that involves at least three-outcome POVM and simultaneously connects with the epistemic overlap. This is precisely where Quantum Gambling becomes essential to our analysis. 

\textit{Generalized overlaps of quantum states--} We generalize the idea of epistemic and quantum overlaps by defining them with respect to the best average reward of Quantum Gambling, $S^{Gam}_Q(\psi_1,\psi_2;\alpha,\beta)$. Since $S^{Gam}_Q(\psi_1,\psi_2;0,0)=D_Q(\psi_1,\psi_2)$, the generalization of the overlaps is quite natural and it reduces to the standard overlaps for $\alpha=\beta=0$. The following theorem provides us with the definition of generalized overlap.

\begin{thm} \label{def_w_lambda}
    For every value of $\alpha$ and $\beta$ in the range $[0,1]$, and two quantum states $\ket{\psi_1}$ and $\ket{\psi_2}$, the following inequality holds,
    \begin{equation}
        S^{Gam}_Q(\psi_1,\psi_2;\alpha,\beta)\leq S^{Gam}_{\Lambda}(\psi_1,\psi_2;\alpha,\beta)=1-\frac{\omega_{\Lambda}(\psi_1,\psi_2;\alpha,\beta)}{2},
    \end{equation}
    where $\omega_{\Lambda}(\psi_1,\psi_2;\alpha,\beta)$ is the generalized epistemic overlap of $\ket{\psi_1}$ and $\ket{\psi_2}$ and is defined in terms of three unnormalized distributions, $\{\tilde{\mu}_1(\lambda)\},\{\tilde{\mu}_2(\lambda)\}$ and $\{\tilde{\mu}_3(\lambda)\}$. These are defined as,
    \begin{equation} \label{dist_def}
        \begin{split}
            &\tilde{\mu}_1(\lambda)=(1+\beta)\mu(\lambda|\psi_1),\\
            &\tilde{\mu}_2(\lambda)=(1+\beta)\mu(\lambda|\psi_2),\\
            &\tilde{\mu}_3(\lambda)=(\alpha+\beta)\left(\mu(\lambda|\psi_1)+\mu(\lambda|\psi_2)\right),
        \end{split}
    \end{equation}
    such that,
    \begin{equation} \label{w_S_mod_rep}
\begin{split}
&\omega_{\Lambda}(\psi_1,\psi_2;\alpha,\beta)=T_1+T_2+T_3-T_4 \textrm{ } \textrm{ where,}\\
&T_1=\int_{\Lambda}\min(\tilde{\mu}_1(\lambda),\tilde{\mu}_2(\lambda))d\lambda,\\
&T_2=\int_{\Lambda}\min(\tilde{\mu}_1(\lambda),\tilde{\mu}_3(\lambda))d\lambda-(\alpha+\beta),\\
&T_3=\int_{\Lambda}\min(\tilde{\mu}_2(\lambda),\tilde{\mu}_3(\lambda))d\lambda-(\alpha+\beta),\\
&T_4=\int_{\Lambda}\min(\tilde{\mu}_1(\lambda),\tilde{\mu}_2(\lambda),\tilde{\mu}_3(\lambda))d\lambda.
\end{split}
\end{equation}
\end{thm}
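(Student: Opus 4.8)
The plan is to evaluate $S^{Gam}_Q$ inside an arbitrary ontological model and then bound the measurement optimization from above by relaxing it. Starting from form \eqref{S_M form 2}, I would substitute the ontological decomposition $p(k|\psi_j,\mathcal{M})=\int_\Lambda \mu(\lambda|\psi_j)\xi(k|\lambda,\mathcal{M})\,d\lambda$. Collecting terms and invoking the definitions \eqref{dist_def}, the bracketed expression becomes a single linear functional $\int_\Lambda\big(\tilde\mu_1(\lambda)\xi(1|\lambda)+\tilde\mu_2(\lambda)\xi(2|\lambda)+\tilde\mu_3(\lambda)\xi(3|\lambda)\big)\,d\lambda$; in particular the two $(\alpha+\beta)$ terms merge into the single $\tilde\mu_3$ contribution, since both $\psi_1$ and $\psi_2$ feed outcome $3$.

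The key conceptual step, and the source of the inequality, is that any response scheme $\xi$ realised by a genuine three-outcome POVM in the model is only one admissible choice among all maps satisfying $\xi(k|\lambda)\ge 0$ and $\sum_k\xi(k|\lambda)=1$. Maximising over this larger, unconstrained set can only increase the value, so $S^{Gam}_Q \le \frac{1}{2}\max_\xi\int_\Lambda(\cdots)\,d\lambda-\beta$, where $\xi$ now ranges over all valid response schemes. This relaxed problem decouples across $\lambda$, so its optimum is attained pointwise by placing all weight on whichever of the three unnormalised densities is largest, yielding $\max_\xi\int(\cdots)=\int_\Lambda\max(\tilde\mu_1,\tilde\mu_2,\tilde\mu_3)\,d\lambda$.

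The remaining task is purely algebraic: rewrite this integral in the claimed form. I would use the identity $\max(a,b,c)=a+b+c-\min(a,b)-\min(a,c)-\min(b,c)+\min(a,b,c)$, valid for nonnegative reals, integrate term by term, and evaluate the normalisation integrals $\int\tilde\mu_1=\int\tilde\mu_2=1+\beta$ and $\int\tilde\mu_3=2(\alpha+\beta)$ from $\int_\Lambda\mu(\lambda|\psi_j)\,d\lambda=1$. Comparing with \eqref{w_S_mod_rep}, the two $-(\alpha+\beta)$ shifts inside $T_2$ and $T_3$ are exactly what is needed for the $(\alpha,\beta)$-linear constants to collapse, leaving $\int_\Lambda\max(\tilde\mu_1,\tilde\mu_2,\tilde\mu_3)\,d\lambda=2+2\beta-\omega_\Lambda$; halving and subtracting $\beta$ then gives $S^{Gam}_\Lambda=1-\omega_\Lambda/2$, as stated.

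I expect the only real obstacle to be bookkeeping rather than anything deep: one must track the offset constants carefully so that the $\alpha$- and $\beta$-dependent pieces cancel and the definition of $\omega_\Lambda$ in \eqref{w_S_mod_rep} reproduces itself. The relaxation argument that delivers the inequality is short but must be stated explicitly, since it is precisely the gap between the quantum-achievable response schemes and all stochastic ones that makes $S^{Gam}_\Lambda$ an upper bound rather than an equality.
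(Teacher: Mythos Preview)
Your proposal is correct and follows essentially the same approach as the paper: express $S^{Gam}_Q$ via the ontological model, bound the convex combination over response schemes by the pointwise maximum to obtain $S^{Gam}_\Lambda=\tfrac{1}{2}\int_\Lambda\max(\tilde\mu_1,\tilde\mu_2,\tilde\mu_3)\,d\lambda-\beta$, then apply the $\max$/$\min$ inclusion--exclusion identity together with the normalisations $\int\tilde\mu_1=\int\tilde\mu_2=1+\beta$, $\int\tilde\mu_3=2(\alpha+\beta)$ to land on $1-\omega_\Lambda/2$. Your discussion of the relaxation step is in fact slightly more explicit than the paper's, which simply notes that ``the response schemes form a valid set of convex coefficients.''
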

The proof is straightforward and presented in the Appendix. Here $S^{Gam}_{\Lambda}(\psi_1,\psi_2;\alpha,\beta)$ signifies the best average reward that Alice could have won had she known the ontic state of the system. The generalized epistemic overlap, $\omega_{\Lambda}(\psi_1,\psi_2;\alpha,\beta)$ is represented in Figure \ref{fig:gen_overlap}.

\begin{figure}
    \centering
    \begin{subfigure}{.4\textwidth}
  \centering
  \includegraphics[width=1\linewidth]{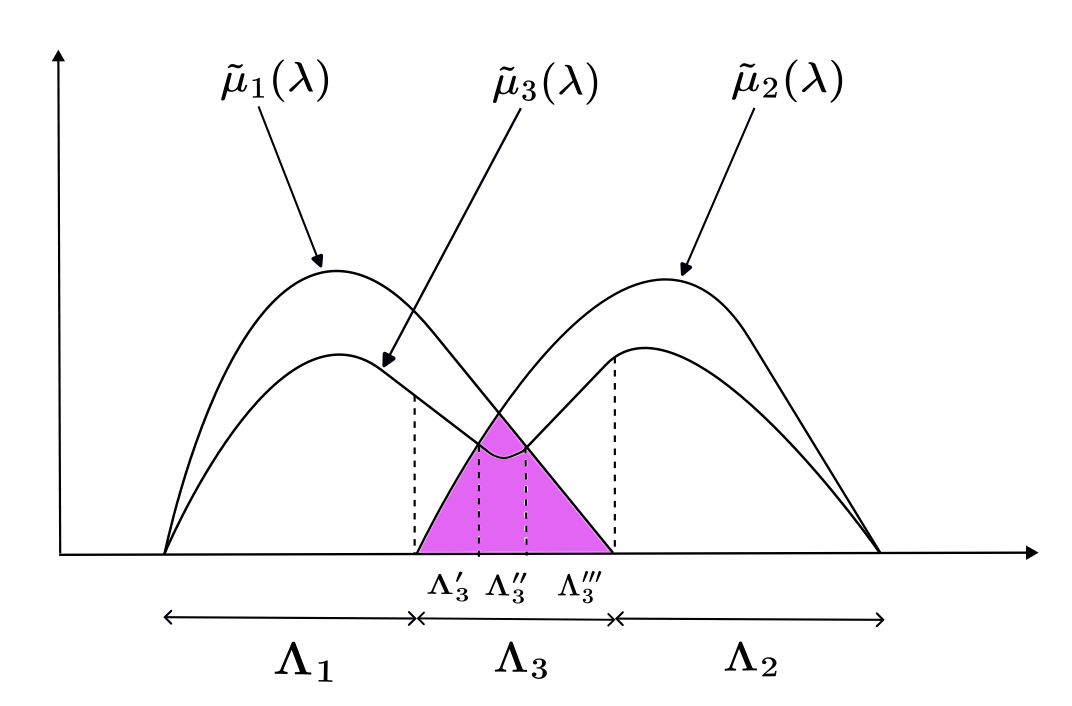}
  \caption{Case 1: $\alpha \leq (1-\beta)/2.$ The area of the pink-shaded region denotes the generalized epistemic overlap.}
  \label{fig:sub1}
\end{subfigure}
\begin{subfigure}{.4\textwidth}
  \centering
  \includegraphics[width=1\linewidth]{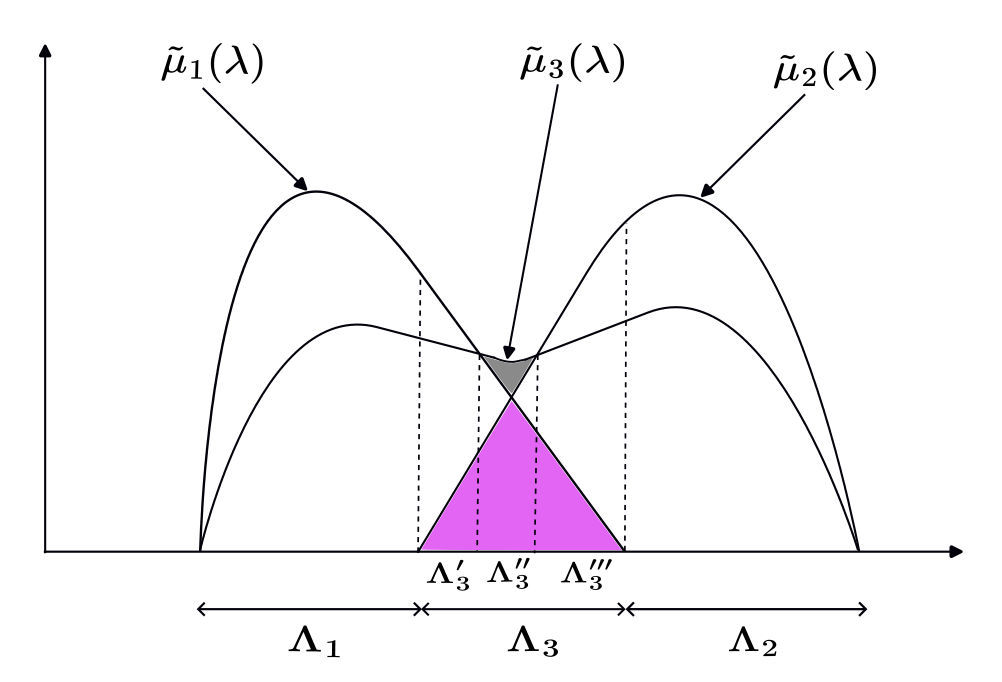}
  \caption{Case 2: $\alpha > (1-\beta)/2.$ The generalized overlap is the net area of the pink-shaded region with the gray-shaded region's area subtracted from it.}
  \label{fig:sub2}
\end{subfigure}
    \caption{Schematic representation of the generalized epistemic overlap of two states $\ket{\psi_1}$ and $\ket{\psi_2}$. The ontic space has been divided into three mutually exclusive regions, $\Lambda_1$, $\Lambda_2$, and $\Lambda_3$, with $\Lambda_3$ being further subdivided into $\Lambda_3'$, $\Lambda_3''$, and $\Lambda_3'''$. $\Lambda_1$ is defined as the region where $\tilde{\mu}_2(\lambda)=0$ and $\tilde{\mu}_1(\lambda)\geq\tilde{\mu}_3(\lambda)$ while $\Lambda_2$ is the region where $\tilde{\mu}_1(\lambda)=0$ and $\tilde{\mu}_2(\lambda)\geq\tilde{\mu}_3(\lambda)$. In the region $\Lambda_3$, all three distributions are non-zero such that in $\Lambda_3'$, $\tilde{\mu}_1(\lambda)\geq\tilde{\mu}_3(\lambda)\geq\tilde{\mu}_2(\lambda)$ while in $\Lambda_3'''$, $\tilde{\mu}_2(\lambda)\geq\tilde{\mu}_3(\lambda)\geq\tilde{\mu}_1(\lambda)$. In $\Lambda_3''$ either $\tilde{\mu}_3(\lambda)\leq \tilde{\mu}_1(\lambda)$ and $\tilde{\mu}_2(\lambda)$ or $\tilde{\mu}_3(\lambda)\geq \tilde{\mu}_1(\lambda)$ and $\tilde{\mu}_2(\lambda)$ depending upon whether $\alpha\leq(1-\beta)/2$ or $\alpha>(1-\beta)/2$.} 
    \label{fig:gen_overlap}
\end{figure}
Notice that the behavior of $\tilde{\mu}_3(\lambda)$ is parameter-dependent such that for $\alpha\leq (1-\beta)/2$, $\tilde{\mu}_3(\lambda)\leq \tilde{\mu}_1(\lambda)$ and $\tilde{\mu}_2(\lambda) \quad \forall\lambda\in\Lambda_3''$ whereas for $\alpha>(1-\beta)/2$, $\tilde{\mu}_3(\lambda)> \tilde{\mu}_1(\lambda)$ and $\tilde{\mu}_2(\lambda) \quad \forall\lambda\in\Lambda_3''$. As a result of this change in behavior of $\tilde{\mu}_3(\lambda)$, the values of $T_2$, $T_3$ and $T_4$ change as we cross the line $\alpha=(1-\beta)/2$ in the parameter space. Interestingly, for $\alpha\leq(1-\beta)/2$, $T_2+T_3-T_4=0$, while for $\alpha>(1-\beta)/2$, it's some non-zero quantity. This allows us to write,
\begin{equation} \label{w_S_projective}
        \omega_{\Lambda}(\psi_1,\psi_2;\alpha,\beta)\Bigg|_{\alpha\leq\frac{1-\beta}{2}}=\int_{\Lambda_3}\min(\tilde{\mu}_1(\lambda),\tilde{\mu}_2(\lambda))d\lambda,
\end{equation}
and,
\begin{equation} \label{w_S_POVM}
\begin{split}
    \omega_{\Lambda}(\psi_1,\psi_2;\alpha,\beta)\Bigg|_{\alpha>\frac{1-\beta}{2}}&=\int_{\Lambda_3}\min(\tilde{\mu}_1(\lambda),\tilde{\mu}_2(\lambda))d\lambda\\
    &+\int_{\Lambda_3''}\max(\tilde{\mu}_1(\lambda),\tilde{\mu}_2(\lambda))d\lambda\\
    &-\int_{\Lambda_3''}\tilde{\mu}_3(\lambda)d\lambda.
\end{split}
\end{equation}
It is clear from \eqref{w_S_projective} that for $\alpha\leq(1-\beta)/2$, $\omega_{\Lambda}(\psi_1,\psi_2;\alpha,\beta)$ is equivalent to $\omega_{\Lambda}(\psi_1,\psi_2;0,0)$ up to a multiplicative constant of $(1+\beta)$, and thus Quantum Gambling is equivalent to a distinguishability game which is optimized by a two-outcome projective measurement. For qubit states, then, this parameter regime provides us with no new insight. However, for $\alpha>(1-\beta)/2$, $\omega_{\Lambda}(\psi_1,\psi_2;\alpha,\beta)$ is \textit{not reducible} to $\omega_{\Lambda}(\psi_1,\psi_2;0,0)$, and thus the optimization must involve a genuine three-outcome measurement, which for qubits imply an optimization over non-projective POVMs. 

Similar to the standard notion of quantum overlap, one can define the \textit{generalized quantum overlap} as,
\begin{equation}
    \omega_Q(\psi_1,\psi_2;\alpha,\beta)=2(1-S^{Gam}_Q(\psi_1,\psi_2;\alpha,\beta)),
\end{equation}
such that now the categorization of ontological models into maximally $\psi-$epistemic, non-maximally $\psi-$epistemic or $\psi-$ontic shall be done by how well $\omega_{\Lambda}(\psi_1,\psi_2;\alpha,\beta)$ is in accounting for $\omega_Q(\psi_1,\psi_2;\alpha,\beta)$. Unlike the standard epistemic and quantum overlaps that can achieve a maximum value $\omega_{\textrm{max}}=1$, generalized epistemic and quantum overlaps can achieve a maximum value $\omega_{\textrm{max}}=1-2\alpha-\beta+\min(1+\beta,2(\alpha+\beta))$, such that,
\begin{equation}\label{range}
    0\leq \omega_{\Lambda}\leq\omega_Q\leq 1-2\alpha-\beta+\min(1+\beta,2(\alpha+\beta)).
\end{equation}
\textit{No-go theorem for maximally $\psi-$epistemic models---} Having discussed the nature of the generalized overlaps, we now present a theorem that lets us bound the gap between the overlaps.

\begin{thm} \label{thm_2}
    The difference of generalized quantum and epistemic overlap of two pure quantum states $\ket{\psi_1}$ and $\ket{\psi_2}$ for every value of $\alpha$ and $\beta$ in the range $[0,1]$ is lower bounded by an operational quantity $B_Q(\psi_1,\psi_2,\rho;\alpha,\beta)$, that is,
    \begin{equation} \label{diff_overlap}
        \omega_Q(\psi_1,\psi_2;\alpha,\beta)-\omega_{\Lambda}(\psi_1,\psi_2;\alpha,\beta)\geq B_Q(\psi_1,\psi_2,\rho;\alpha,\beta),
    \end{equation}
    where $\rho=1/2(\ket{\psi_1}\!\bra{\psi_1}+\ket{\psi_2}\!\bra{\psi_2})$ and,
    \begin{equation} \label{b_label}
        \begin{split}
            B_Q(\psi_1,\psi_2,\rho;\alpha,\beta)=&2\Big((1-\alpha)D_Q(\psi_1,\psi_2)\\
            &+D_Q(\{\psi_1,(1+\beta)/2\},\{\rho,\alpha+\beta\})
        \\
        &+D_Q(\{\psi_2,(1+\beta)/2\},\{\rho,\alpha+\beta\})\\
        &-S^{Gam}_Q(\psi_1,\psi_2;\alpha,\beta)
       -2\beta-1\Big).
        \end{split}
    \end{equation}
\end{thm}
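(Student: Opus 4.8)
\textit{Proof plan.}

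The plan is to purge the operational quantity $S^{Gam}_Q$ from the inequality and reduce \eqref{diff_overlap} to a statement internal to the ontological model. Using the definition $\omega_Q=2(1-S^{Gam}_Q)$ together with the equality $S^{Gam}_{\Lambda}=1-\omega_{\Lambda}/2$ of Theorem~\ref{def_w_lambda}, one has $\omega_Q-\omega_{\Lambda}=2\big(S^{Gam}_{\Lambda}-S^{Gam}_Q\big)$. Substituting this and the explicit form \eqref{b_label} of $B_Q$ into \eqref{diff_overlap}, the term $S^{Gam}_Q$ appears with the same coefficient on both sides and cancels, so that the theorem is \emph{equivalent} to the lower bound
\[
S^{Gam}_{\Lambda}(\psi_1,\psi_2;\alpha,\beta)\geq (1-\alpha)D_Q(\psi_1,\psi_2)+D_Q(\{\psi_1,\tfrac{1+\beta}{2}\},\{\rho,\alpha+\beta\})+D_Q(\{\psi_2,\tfrac{1+\beta}{2}\},\{\rho,\alpha+\beta\})-2\beta-1.
\]

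I would first trade the operational distinguishabilities for their ontological counterparts. For any weighted pair the strategy of guessing, at each $\lambda$, the alternative with the larger weighted epistemic density can only outperform any measurement, giving $D_Q\leq D_{\Lambda}$ with $D_{\Lambda}(\{\phi,w\},\{\phi',w'\})=\int_{\Lambda}\max\big(w\mu(\lambda|\phi),w'\mu(\lambda|\phi')\big)d\lambda$; for the equal mixture I use its epistemic state $\mu(\lambda|\rho)=\tfrac12\big(\mu(\lambda|\psi_1)+\mu(\lambda|\psi_2)\big)$. Because the three distinguishabilities enter with the non-negative coefficients $(1-\alpha),1,1$, replacing each $D_Q$ by the larger $D_{\Lambda}$ only raises the right-hand side, so it suffices to prove the fully ontological version. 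Rewriting everything through the unnormalized densities $\tilde{\mu}_i$ of \eqref{dist_def}, using $S^{Gam}_{\Lambda}=\tfrac12\int_{\Lambda}\max(\tilde{\mu}_1,\tilde{\mu}_2,\tilde{\mu}_3)d\lambda-\beta$ (the optimal response scheme in the objective of \eqref{S_Q}), $D_{\Lambda}(\psi_1,\psi_2)=\tfrac{1}{2(1+\beta)}\int_{\Lambda}\max(\tilde{\mu}_1,\tilde{\mu}_2)d\lambda$ and $D_{\Lambda}(\{\psi_j,\tfrac{1+\beta}{2}\},\{\rho,\alpha+\beta\})=\tfrac12\int_{\Lambda}\max(\tilde{\mu}_j,\tilde{\mu}_3)d\lambda$, together with the identities $\tfrac{1-\alpha}{1+\beta}=1-s$ and $2\beta+2=\int_{\Lambda}(\tilde{\mu}_1+\tilde{\mu}_2)d\lambda$ where $s:=\tfrac{\alpha+\beta}{1+\beta}\in[0,1]$, the claim becomes the integral inequality
\[
\int_{\Lambda}\max(\tilde{\mu}_1,\tilde{\mu}_2,\tilde{\mu}_3)d\lambda\geq (1-s)\int_{\Lambda}\max(\tilde{\mu}_1,\tilde{\mu}_2)d\lambda+\int_{\Lambda}\max(\tilde{\mu}_1,\tilde{\mu}_3)d\lambda+\int_{\Lambda}\max(\tilde{\mu}_2,\tilde{\mu}_3)d\lambda-\int_{\Lambda}(\tilde{\mu}_1+\tilde{\mu}_2)d\lambda.
\]

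The crux, and the only place genuine work is needed, is that the three densities are \emph{not} independent: \eqref{dist_def} gives the exact pointwise relation $\tilde{\mu}_3=s(\tilde{\mu}_1+\tilde{\mu}_2)$. This collapses the integrand estimate to the two-variable inequality $\max(a,b,c)\geq (1-s)\max(a,b)+\max(a,c)+\max(b,c)-(a+b)$ for $a,b\geq0$ and $c=s(a+b)$, which I would settle by elementary ordering analysis: taking $a\geq b$ without loss of generality, the three cases $c\geq a$, $b\leq c\leq a$, and $c\leq b$ each reduce to the single requirement $c\leq sa+b$ (the last case to $sa\geq0$), and $c=sa+sb\leq sa+b$ holds precisely because $s\leq1$. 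Integrating over $\Lambda$ yields the displayed bound and hence the theorem. I expect the main obstacle to be bookkeeping rather than the estimate itself: correctly cancelling $S^{Gam}_Q$, matching the additive constant $-2\beta-1$ to $\int_{\Lambda}(\tilde{\mu}_1+\tilde{\mu}_2)d\lambda$, and keeping every inequality pointing the right way when $D_Q$ is traded for $D_{\Lambda}$. Notably the pointwise argument never invokes the sign of $s-\tfrac12$, so the bound holds uniformly across the regime split at $\alpha=(1-\beta)/2$ of Fig.~\ref{fig:gen_overlap}, even though that line controls whether $\tilde{\mu}_3$ ever achieves the running maximum.
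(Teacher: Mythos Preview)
Your proof is correct and, once unwound, coincides with the paper's argument: your pointwise inequality $\max(a,b,c)\geq(1-s)\max(a,b)+\max(a,c)+\max(b,c)-(a+b)$ with $c=s(a+b)$ is, after applying the $\max$/$\min$ inclusion--exclusion identities, precisely the paper's bound $\min(\tilde{\mu}_1,\tilde{\mu}_2,\tilde{\mu}_3)\geq(\alpha+\beta)\min(\mu(\lambda|\psi_1),\mu(\lambda|\psi_2))$ on the integrand of $T_4$, and your replacement $D_Q\leq D_{\Lambda}$ is the same step the paper writes as $D_Q\leq w_1+w_2-\int\min$. The only difference is cosmetic---you phrase everything via $S^{Gam}_{\Lambda}$ and maxima, while the paper works via $\omega_{\Lambda}$ and minima---but the logical skeleton (convexity of $\mu(\lambda|\rho)$, the single pointwise estimate, and the $D_Q\leq D_{\Lambda}$ replacement) is identical.
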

\begin{proof}
The proof of this theorem has been presented in detail in the Appendix. The important point to note is that the proof utilizes the fact that, if we have a quantum mixed preparation $\sigma$ where, 
\begin{equation} \label{convexity}
\begin{split}
    &\sigma=\sum_iq_i\ket{\phi_i}\!\bra{\phi_i} \textrm{ } \textrm{ then it implies that,}\\
    &\mu(\lambda|\sigma)=\sum_iq_i\mu(\lambda|\phi_i) \quad \forall \lambda\in \Lambda,
    \end{split}
\end{equation}
which is to say that the \textit{convexity of a mixed preparation is preserved in its epistemic state} \cite{mixed}. This allows us to write $\tilde{\mu}_3(\lambda)=2(\alpha+\beta)\mu(\lambda|\rho)$, which then facilitates the proof. 
\end{proof}
We are now interested in finding the absolute maximal quantum value of $B_Q(\psi_1,\psi_2,\rho;\alpha,\beta)$, that is, over every possible pair of quantum states, $\{\ket{\psi_1},\ket{\psi_2}\}$. Towards this end, we formulate SDP techniques to obtain dimension-dependent lower-bounds and dimension-independent upper-bounds, and present the results via the following theorem, where, as one can see, we have set $\beta=1$. The reason for setting $\beta=1$ and the techniques employed to find the bounds have been described in the proof. 

     
    
\begin{thm} \label{result}
     The maximum quantum value of $B_Q(\psi_1,\psi_2,\rho;\alpha,\beta=1)$ for all $\alpha\in(\approx0.49,1]$,
     \begin{equation} \label{B_Q(alpha)}
     \begin{split}
B_Q(\alpha)=\max_{\{\psi_1,\psi_2\}}(B_Q(\psi_1,\psi_2,\rho;\alpha,\beta=1)),
\end{split}
\end{equation}
     is achieved by a pair of qubit states, shown in Figure \ref{fig:sub3}, and 
     $\max_{\{\alpha\}}(B_Q(\alpha))\approxeq0.0639$ is achieved at $\alpha\approxeq 0.7124$ for $\ket{\psi_1}=\ket{0}$ and $\ket{\psi_2}=\cos(\pi/3)\ket{0}+\sin(\pi/3)\ket{1}$ (up to machine precision).

\end{thm}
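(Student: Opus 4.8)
The plan is to first remove the state optimization almost entirely by noting that $B_Q(\psi_1,\psi_2,\rho;\alpha,\beta)$ is a unitary invariant of the unordered pair $(\ket{\psi_1},\ket{\psi_2})$ and hence a function of the single overlap $c=|\langle\psi_1|\psi_2\rangle|$ alone. Concretely, the two pure states and their equal mixture $\rho$ all lie in the two-dimensional subspace $V=\mathrm{span}\{\ket{\psi_1},\ket{\psi_2}\}$; for any POVM $\{M_k\}$ on the ambient Hilbert space, the compression $\{\Pi M_k\Pi\}$ (with $\Pi$ the projector onto $V$) reproduces every probability $p(k|\psi_j,\mathcal{M})=\langle\psi_j|M_k|\psi_j\rangle$ and remains a valid POVM on $V$ since $\Pi(\sum_k M_k)\Pi=\Pi$. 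Therefore each ingredient of $B_Q$ in \eqref{b_label} -- the distinguishability $D_Q(\psi_1,\psi_2)$, the two weighted distinguishabilities against $\rho$, and the gambling value $S^{Gam}_Q$ of \eqref{S_Q} -- attains its optimum already within $V$, so the supremum over arbitrary dimension coincides with the qubit value. This single observation delivers both the dimension-independent upper bound and the statement that the maximum is attained by a qubit pair, for every $\alpha$; what remains is to evaluate $B_Q$ as a function of $c$.

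To get the explicit value I would set $\ket{\psi_1}=\ket{0}$, $\ket{\psi_2}=\cos\theta\ket{0}+\sin\theta\ket{1}$ so that $c=\cos\theta$, and evaluate the four terms. The three distinguishabilities are binary state-discrimination problems solved in closed form by the Helstrom rule, $D_Q(\{\phi,w_1\},\{\sigma,w_2\})=\tfrac12\big(w_1+w_2+\|\,w_1\ket{\phi}\!\bra{\phi}-w_2\sigma\,\|_1\big)$, which for the rank $\le 2$ operators here is elementary in $c$; in particular $D_Q(\psi_1,\psi_2)=\tfrac12(1+\sqrt{1-c^{2}})$, and the two $\rho$-terms are equal by the reflection that swaps $\ket{\psi_1}\!\leftrightarrow\!\ket{\psi_2}$ while fixing $\rho$. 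The only term without a closed form is $S^{Gam}_Q(\psi_1,\psi_2;\alpha,\beta)$; for each fixed $(c,\alpha,\beta)$ I would read it off from the three-outcome POVM SDP in \eqref{S_Q}, which obeys Slater's condition and hence returns the optimum to machine precision. Assembling these pieces expresses $B_Q(c;\alpha,\beta)$ as a one-dimensional objective to be maximized over $c\in[0,1]$.

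Next I would justify fixing $\beta=1$ by verifying that the maximal gap over $\beta\in[0,1]$ sits at the boundary $\beta=1$ (heuristically, the largest penalty maximizes the quantum-over-epistemic leverage); at $\beta=1$ the threshold $\alpha>(1-\beta)/2$ of \eqref{w_S_POVM} becomes simply $\alpha>0$, so the entire relevant window lies in the genuinely three-outcome, non-projective regime. I would then scan $\alpha$: for each value solve the $S^{Gam}_Q$ SDP along a grid in $c$, combine with the Helstrom closed forms, and maximize over $c$ to obtain $B_Q(\alpha)$. The resulting envelope is non-positive (hence the bound is vacuous) until the crossing near $\alpha\approx0.49$, turns positive above it, and peaks at $\alpha\approx0.7124$ with optimizer $\theta=\pi/3$, i.e. $c=\tfrac12$ -- exactly the pair $\ket{\psi_1}=\ket{0}$, $\ket{\psi_2}=\cos(\pi/3)\ket{0}+\sin(\pi/3)\ket{1}$ -- giving $\max_\alpha B_Q(\alpha)\approx0.0639$.

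The main obstacle is the $-S^{Gam}_Q$ contribution. Since $S^{Gam}_Q$ is itself a maximum, $B_Q$ is a difference of two optimizations (a \emph{max minus a max}), so the outer maximization over states is neither a plain SDP nor a standard polynomial optimization, and one cannot certify a dimension-free upper bound by bounding the pieces in isolation. The reduction of the first paragraph sidesteps this by collapsing the problem to the one-dimensional scan above, but to obtain the bound without invoking the $V$-compression one must dualize the gambling SDP -- legitimate by the strong-duality property already noted -- so that $S^{Gam}_Q$ becomes a \emph{minimization} over dual variables and $-S^{Gam}_Q$ a maximization; $B_Q$ then becomes a single joint maximization over states, measurement operators and dual variables, precisely the form a tracial noncommutative (moment/SOS) hierarchy relaxes into monotone, dimension-independent upper bounds. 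Certifying that the first nontrivial level of this hierarchy already meets the qubit lower bound to machine precision -- equivalently, that the compression argument is tight -- is what upgrades the numerics into the claimed optimality over all pairs of quantum states.
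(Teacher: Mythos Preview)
Your proposal is correct, and the compression-to-$V$ argument in your first paragraph is in fact a cleaner route to dimension-independence than the one the paper takes. The paper's proof is entirely numerical: it runs an alternating see-saw SDP (over states $\rho_1,\rho_2$ and the dual variable $Y$ of the gambling problem) to obtain dimension-dependent lower bounds $B_{Q_L}(\alpha,\beta)$, and separately builds a tracial noncommutative-polynomial relaxation with moment matrices $\Gamma_{\rho_1},\Gamma_{\rho_2},\Gamma_Y,\Gamma$ to certify dimension-independent upper bounds $B_{Q_{UB}}(\alpha,\beta)$; the theorem follows because the two match to machine precision on the stated range of $\alpha$. Your observation that all four ingredients of $B_Q$ depend only on $c=|\langle\psi_1|\psi_2\rangle|$---because both pure states and $\rho$ are supported on $V=\mathrm{span}\{\ket{\psi_1},\ket{\psi_2}\}$ and every POVM compresses faithfully to $V$---collapses the outer state-optimization to a one-parameter scan and makes the qubit optimality \emph{analytic} rather than a consequence of matching numerical bounds. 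What the paper's machinery buys is generality: the see-saw/tracial-NPA sandwich would still apply to variants of the problem where no two-dimensional compression is available (e.g.\ more than two pure preparations, or genuinely mixed inputs not sharing a common support), and it simultaneously delivers the optimal states without assuming purity a priori. Your route buys simplicity and a cleaner explanation of \emph{why} qubits are optimal here, at the cost of being specific to the two-pure-state setup. The remaining numerical content---locating $\beta=1$ as optimal and the peak at $\alpha\approx0.7124$, $c=\tfrac12$---is handled the same way in both approaches, and your final paragraph correctly identifies the dualize-$S^{Gam}_Q$ trick that the paper actually uses to turn the max--minus--max into a single joint maximization.
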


\begin{proof}
    Figure \ref{fig:sub4} shows that $B_Q(\alpha)$ is achieved by a pair of pure qubit states, $\ket{\psi_1}=\ket{0}$ and $\ket{\psi_2}=\cos(\tilde{\theta}/2)\ket{0}+\sin(\tilde{\theta}/2)\ket{1}$ where $\tilde{\theta}$ is the relative Bloch angle between $\ket{\psi_1}$ and $\ket{\psi_2}$ and $\theta$ is the scaled Bloch angle, i.e. $\theta=\tilde{\theta}/\pi$. We find that changing $\beta$ does not alter the optimal states, and the maximum value is attained when $\beta=1$. The proof is numerical and utilizes SDP techniques to obtain dimension-dependent lower-bounds and dimension-independent upper-bounds adapted from \cite{Tavakoli2022informationally,Chaturvedi2021characterising}. The detailed description of the techniques has been deferred to Appendix for brevity.
\end{proof}

\begin{figure}
    \centering
    \begin{subfigure}{.4\textwidth}
  \centering
  \includegraphics[width=1\linewidth]{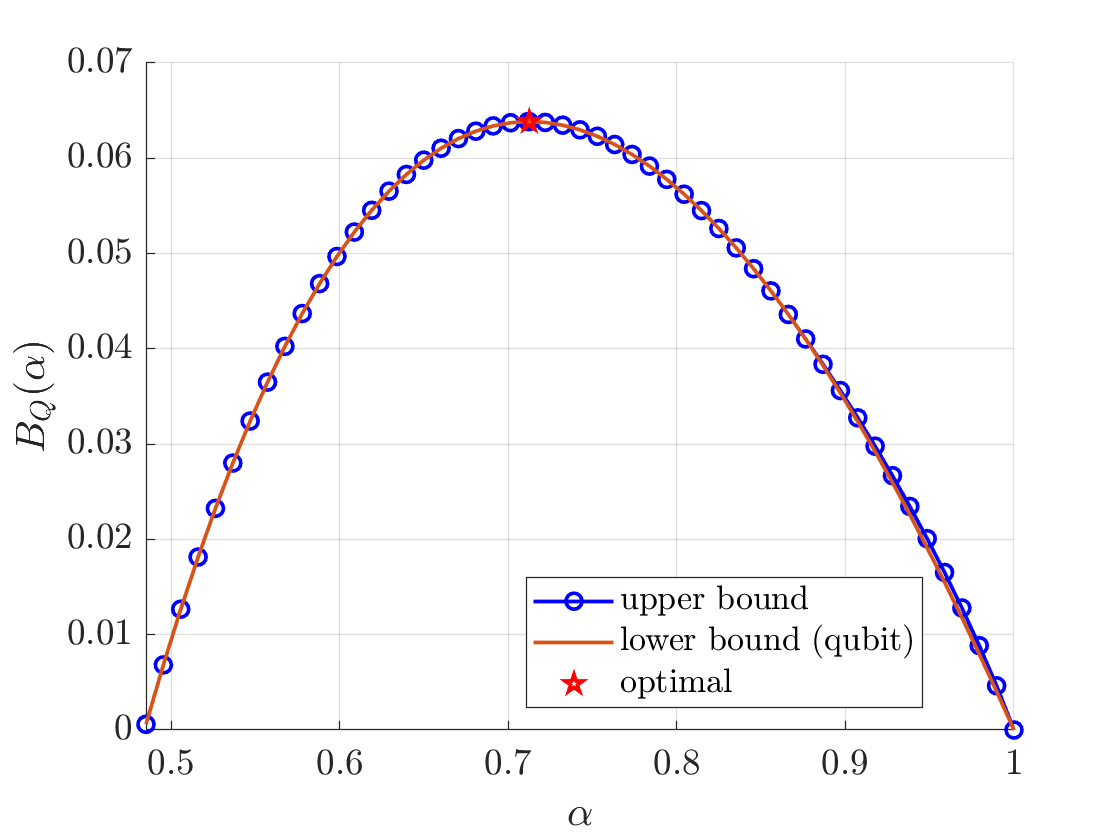}
  \caption{Variation of $B_Q(\alpha)$ with $\alpha$. The red star denotes the maximal value.}
  \label{fig:sub3}
\end{subfigure}
\begin{subfigure}{.4\textwidth}
  \centering
  \includegraphics[width=1\linewidth]{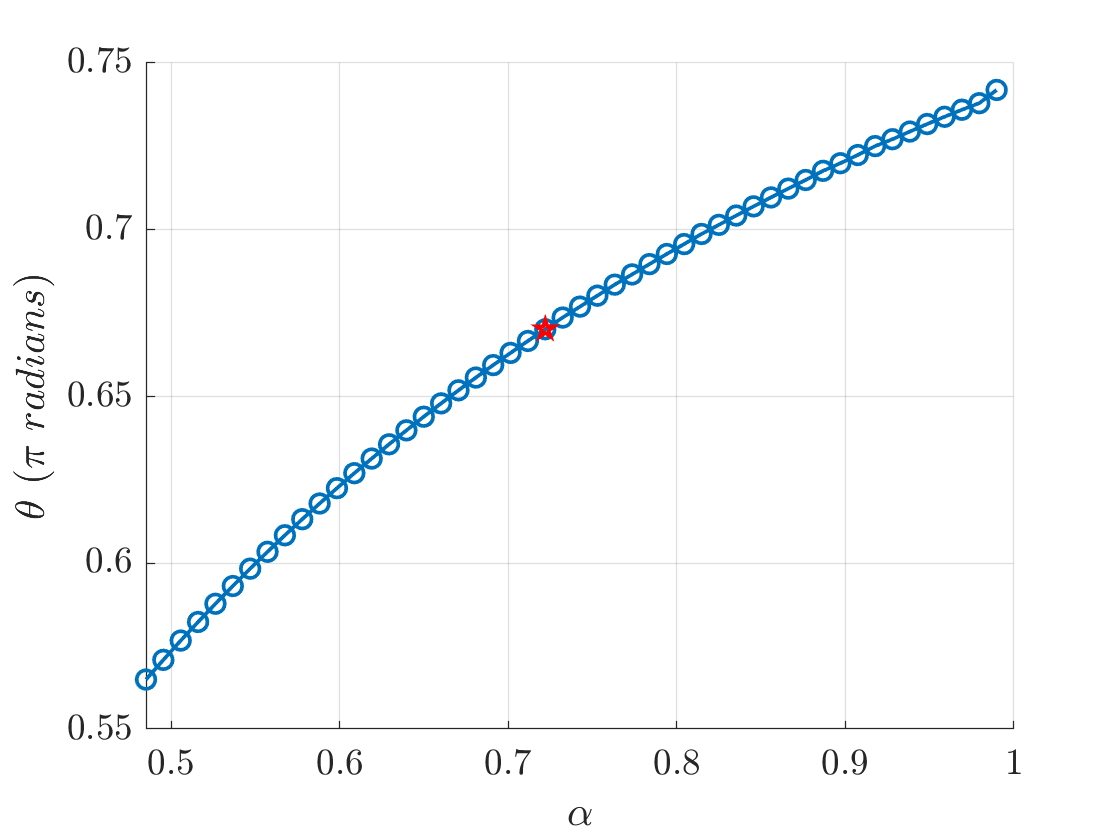}
  \caption{The variation with $\alpha$ of the relative Bloch angle scaled with $1/\pi$ between $\ket{\psi_1}$ and $\ket{\psi_2}$ that results in $B_Q(\psi_1,\psi_2,\rho;\alpha,\beta=1)=B_Q(\alpha)$. The red star denotes the point of maximal value of $B_Q(\alpha)$.}
  \label{fig:sub4}
\end{subfigure}
    \caption{Computational investigation of the gap between the overlaps.}
    \label{fig:op_bound}
\end{figure}


\textit{Advantage in constrained communication scenario---} One can also see the gap in the overlaps in light of the constrained communication scenarios. Consider a communication task with Bob as the sender and Alice as the receiver. Bob has one input that can take three values, $x\in\{1,2,3\}$, and so does Alice, $y\in\{1,2,3\}$. For inputs $x=1$ and $x=2$, Bob sends the states $\ket{\psi_1}$ and $\ket{\psi_2}$ respectively, however, for $x=3$, Bob always sends an equal mixture, $\rho=(1/2)(\ket{\psi_1}\!\bra{\psi_1}+\ket{\psi_2}\!\bra{\psi_2})$ to Alice. For every input $y$ of Alice, she can implement a binary measurement $\mathcal{M}^y=\{M_1^y,M_2^y\}$ on the messages she receives from Bob such that the success metric of the communication task is defined as,
\begin{equation} \label{comm_task}
\begin{split}
    S_Q^{Comm}=&\frac{(1-\alpha)}{2}\left(p(1|\psi_1,\mathcal{M}^1)+p(2|\psi_2,\mathcal{M}^1)\right)\\
    &+p(1|\psi_1,\mathcal{M}^2)+(1+\alpha)p(2|\rho,\mathcal{M}^2)\\
    &+p(1|\psi_2,\mathcal{M}^3)+(1+\alpha)p(2|\rho,\mathcal{M}^3).
    \end{split}
\end{equation}
Notice that for a given set of messages $\{\ket{\psi_1},\ket{\psi_2},\rho\}$, the optimal success metric of the communication task is achieved for those measurements for which \eqref{comm_task} becomes,
\begin{equation} \label{op_comm_task}
\begin{split}
    S_Q^{Comm}\big|_{\textrm{op}}=&(1-\alpha)D_Q(\psi_1,\psi_2)+D_Q(\{\psi_1,1\},\{\rho,1+\alpha\})\\
    &+D_Q(\{\psi_2,1\},\{\rho,1+\alpha\}).
    \end{split}
\end{equation}
In the classical version of this quantum strategy, Bob sends a classical message $\lambda$ from a set $\Lambda$ where for $x=1$ and $x=2$, he draws the messages from the distributions $\{\mu(\lambda|x=1)\}$ and $\{\mu(\lambda|x=2)\}$ respectively, whereas for $x=3$, he draws the message from $\{\mu(\lambda|x=3)\}$ where $\mu(\lambda|x=3)=(1/2)(\mu(\lambda|x=1)+\mu(\lambda|x=2))$. Now, for a fixed value of $\alpha$, Bob requires his messages for $x=1$ and $x=2$ to at most yield a reward of $r\in[\alpha,1)$ in a game of gambling. From Theorem \ref{thm_2}, we see that (after setting $\beta=1)$,
    $S^{Gam}_{\Lambda}\geq S^{Comm}_Q|_{\textrm{op}}-3$. Thus, if we can show that $S^{Comm}_Q|_{\textrm{op}}\geq 3+r$, then there exists no classical strategy that can yield a reward in a game of gambling less than $r$, and hence quantum strategies are advantageous over classical ones. Theorem \ref{result} shows that for $\alpha\in(\approx0.49,1]$, one can always find such a quantum strategy, thus demonstrating quantum advantage with essentially just two qubit states.

    \textit{Discussions and future directions---} In this letter, we presented a no-go theorem for maximally $\psi-$epistemic models for single-qubit systems by providing a non-negative lower bound on the difference of their generalized quantum and epistemic overlap. Here we would like to point out that the choice to use the \textit{difference} of overlaps instead of the \textit{ratio}, as has been done previously in \cite{Barrett,Branciard,Leifer} is a deliberate one. As pointed out by Leifer \cite{ytube,ETHZurich}, to show that $\psi-$epistemic models are unable to describe quantum statistics, one requires  $\omega_{\Lambda}$ to be as small as possible, while at the same time one requires $\omega_Q$ to get as close as possible to $\omega_{\textrm{max}}$. This task is not achievable if one considers the ratio of overlaps, and rather the difference has to be considered, where in some asymptotic limit, one must try to show that $(\omega_Q-\omega_{\Lambda})\rightarrow \omega_{\textrm{max}}$. In the future, it would be interesting to see if this can be achieved using Quantum Gambling by appropriately considering a set of quantum states with increasing dimension and/or number of states. Given the fact that just two pure qubits were sufficient to show a gap, where previously at least a set of $16$ $4-$dimensional quantum states were required to do the same \cite{Barrett}, Quantum Gambling seems promising. This letter also addresses another point raised by Leifer, which is that focusing on the indistinguishability of pure states as a quantifier of their epistemic nature is arbitrary. In fact, not only is focusing on indistinguishability arbitrary, it is also insufficient insofar as it can't constrain $\psi-$epistemic models for qubits at all. We had to consider a different quantum task altogether to arrive at our no-go theorem. This raises the question, what other quantum tasks might be designed for constraining $\psi-$epistemic models? It remains to be seen. 

    \textit{Acknowledgements---} We thank the International Conference on the Foundations of Quantum Mechanics, held in December 2024 at IISER Kolkata, as well as Matthew Leifer and Sahil Gopalkrishna Naik, for the fruitful discussions. DS acknowledges the financial support from STARS (STARS/STARS-2/2023-0809), Govt. of India. AC acknowledges financial support by NCN Grant SONATINA 6 (Contract No. UMO-2022/44/C/ST2/00081).

\bibliography{ref} 
\nocite{*}

\vspace{0.5cm}

\textbf{Appendix} 

In this Appendix, we present results concerning Quantum Gambling, and the proofs of the theorems discussed above. 

 \textit{SDP formulation of Quantum Gambling---} The primal of the task is formulated as follows:
 \begin{equation} \label{primal}
 \begin{aligned}
S^{Gam}_Q=\max_{\{M_1,M_2,M_3\}}\quad & \frac{1}{2}\sum_{i=1}^3 \tr( C_i M_i) \\
\text{subject to}\quad & \sum_{i=1}^3 M_i = \mathbbm{1},\\
& M_i \succeq 0,\quad i=1,2,3,
\end{aligned}
 \end{equation}
where $C_1=\ket{\psi_1}\!\bra{\psi_1}-\beta\ket{\psi_2}\!\bra{\psi_2}$, $C_2=\ket{\psi_2}\!\bra{\psi_2}-\beta\ket{\psi_1}\!\bra{\psi_1}$ and $C_3=\alpha(\ket{\psi_1}\!\bra{\psi_1}+\ket{\psi_2}\!\bra{\psi_2})$.
The dual of this problem is formulated as
\begin{equation} \label{dual}
    \begin{aligned}
\text{min}\quad &\tr(Y)\\
\text{subject to}\quad & Y \succeq \tfrac{1}{2} C_i,\quad i=1,2,3.
\end{aligned}
\end{equation}
Since the Slater's condition is satisfied, we always find a $Y=Y^*\succeq0$ such that $S^{Gam}_Q=\tr(Y^*)$.


\textit{Proof of Theorem \ref{def_w_lambda}---} $S^{Gam}_Q(\psi_1,\psi_2;\alpha,\beta)$ expressed in terms of the response schemes and epistemic states (written in terms of the non-normalized distributions defined in \eqref{w_S_mod_rep}), looks like,
\begin{equation}
\begin{split}
    S^{Gam}_Q=&\frac{1}{2}\max_{\{\mathcal{M}\}}\int_{\Lambda}\Big(\tilde{\mu}_1(\lambda)\xi(1|\lambda,\mathcal{M})+\tilde{\mu}_2(\lambda)\xi(2|\lambda,\mathcal{M})\\
    &+\tilde{\mu}_3(\lambda)\xi(3|\lambda,\mathcal{M})\Big)-\beta.
    \end{split}
\end{equation}
Since the response schemes form a valid set of convex coefficients, we see that,
\begin{equation}
    \begin{split}
        S^{Gam}_Q\leq S^{Gam}_{\Lambda}=\frac{1}{2}\int_{\Lambda}\max(\tilde{\mu}_1(\lambda),\tilde{\mu}_2(\lambda),\tilde{\mu}_3(\lambda))d\lambda-\beta,
    \end{split}
\end{equation}
following which we use the identity $\max(a,b,c)=a+b+c-\min(a,b)-\min(b,c)-\min(a,c)+\min(a,b,c)$ and the fact that $\int_{\Lambda}\tilde{\mu}_1(\lambda)d\lambda=\int_{\Lambda}\tilde{\mu}_2(\lambda)d\lambda=(1+\beta)$ and $\int_{\Lambda}\tilde{\mu}_3(\lambda)=2(\alpha+\beta)$ to express $S_{\Lambda}$ as,
\begin{equation}
    S^{Gam}_{\Lambda}(\psi_1,\psi_2;\alpha,\beta)=1-\frac{\omega_{\Lambda}(\psi_1,\psi_2;\alpha,\beta)}{2},
\end{equation}
where $\omega_{\Lambda}(\psi_1,\psi_2;\alpha,\beta)$ is the generalized epistemic overlap as expressed in Theorem \ref{def_w_lambda}.

\textit{Behavior of $\tilde{\mu}_3(\lambda)$---} As one can see from Figure \ref{fig:gen_overlap}, the fact that $\tilde{\mu}_3(\lambda)$ can exhibit two different behaviors is evident. Since it is defined as the sum of $\mu(\lambda|\psi_1)$ and $\mu(\lambda|\psi_2)$ times $(\alpha+\beta)$, for the ontic subspaces $\Lambda_1$ and $\Lambda_2$, it stays strictly lesser than $\tilde{\mu}_1(\lambda)$ and $\tilde{\mu}_2(\lambda)$ respectively. However, in the ontic subspace $\Lambda_3$ which forms the common support of both $\mu(\lambda|\psi_1)$ and $\mu(\lambda|\psi_2)$ and thereby of both $\tilde{\mu}_1(\lambda)$ and $\tilde{\mu}_2(\lambda)$, there exists a region in this subspace-- $\Lambda_3''$, where $\tilde{\mu}_3(\lambda)$ is either lesser than or equal to both $\tilde{\mu}_1(\lambda)$ and $\tilde{\mu}_2(\lambda)$ or greater than both $\tilde{\mu}_1(\lambda)$ and $\tilde{\mu}_2(\lambda)$. If we consider the situation as described in Figure \ref{fig:sub1}, then for all $\lambda \in \Lambda_3''$,
\begin{equation}
    \begin{split}
        &(\alpha+\beta)\big(\mu(\lambda|\psi_1)+\mu(\lambda|\psi_2)\big)\leq (1+\beta)\mu(\lambda|\psi_1) \textrm{ and,}\\
        &(\alpha+\beta)\big(\mu(\lambda|\psi_1)+\mu(\lambda|\psi_2)\big)\leq (1+\beta)\mu(\lambda|\psi_2).
    \end{split}
\end{equation}
One can add the above inequalities, which leads to $\alpha\leq (1-\beta)/2$. A similar calculation for the case described in Figure \ref{fig:sub2} leads to $\alpha>(1-\beta)/2$. Thus, as we cross the line $\alpha=(1-\beta)/2$ in the parameter space, the behavior of $\tilde{\mu}_3(\lambda)$ changes.

\textit{Generalized epistemic overlap in two different parameter regions---} The term $T_1$ in \eqref{w_S_mod_rep} is non-zero only over $\Lambda_3$ region and its constribution shows up in both \eqref{w_S_projective} and \eqref{w_S_POVM} as $\int_{\Lambda_3}\min(\tilde{\mu}_1(\lambda),\tilde{\mu}_2(\lambda))d\lambda$. However, the evaluation of the terms $T_2+T_3-T_4$ is non-trivial. Notice that $T_2$ only needs to be evaluated over the region $\Lambda_1+\Lambda_3$, $T_3$ only over $\Lambda_2+\Lambda_3$ and $T_4$ only over $\Lambda_3$. Taking into account the fact that $\forall \lambda \in \Lambda_1$, $\tilde{\mu}_3(\lambda)\leq\tilde{\mu}_1(\lambda)$ and $\forall \lambda \in \Lambda_2$, $\tilde{\mu}_3(\lambda)\leq\tilde{\mu}_2(\lambda)$ we can write,
\begin{equation}
    \begin{split}
        &T_2+T_3-T_4=\int_{\Lambda_1+\Lambda_2}\tilde{\mu}_3(\lambda)d\lambda-2(\alpha+\beta)\\
        &+\int_{\Lambda_3}\left[\min(\tilde{\mu}_1(\lambda),\tilde{\mu}_3(\lambda))+\min(\tilde{\mu}_2(\lambda),\tilde{\mu}_3(\lambda))\right]d\lambda\\
        &-\int_{\Lambda_3}\min(\tilde{\mu}_1(\lambda),\tilde{\mu}_2(\lambda),\tilde{\mu}_3(\lambda))d\lambda.
    \end{split}
\end{equation}
By utilizing the fact that $\int_{\Lambda_1+\Lambda_2}\tilde{\mu}_3(\lambda)d\lambda=2(\alpha+\beta)-\int_{\Lambda_3}\tilde{\mu}_3(\lambda)d\lambda$, which comes from the normalization of $\mu(\lambda|\psi_1)$ and $\mu(\lambda|\psi_2)$, we can write,
\begin{equation}
    \begin{split}
        T_2+T_3-T_4&=\int_{\Lambda_3}\min(\tilde{\mu}_1(\lambda),\tilde{\mu}_3(\lambda))d\lambda\\
        &+\int_{\Lambda_3}\min(\tilde{\mu}_2(\lambda),\tilde{\mu}_3(\lambda))d\lambda\\
        &-\int_{\Lambda_3}\left[\min(\tilde{\mu}_1(\lambda),\tilde{\mu}_2(\lambda),\tilde{\mu}_3(\lambda))+\tilde{\mu}_3(\lambda)\right]d\lambda.
    \end{split}
\end{equation}
The integration over $\Lambda_3$ can be broken apart into integrations over the regions $\Lambda_3'$, $\Lambda_3''$ and $\Lambda_3'''$ such that the above equation reduces to,
\begin{equation}
    \begin{split}
        T_2+T_3-T_4&=\int_{\Lambda_3''}\min(\tilde{\mu}_1(\lambda),\tilde{\mu}_3(\lambda))d\lambda\\
        &+\int_{\Lambda_3''}\min(\tilde{\mu}_2(\lambda),\tilde{\mu}_3(\lambda))d\lambda\\
        &-\int_{\Lambda_3''}\left[\min(\tilde{\mu}_1(\lambda),\tilde{\mu}_2(\lambda),\tilde{\mu}_3(\lambda))+\tilde{\mu}_3(\lambda)\right]d\lambda.
    \end{split}
\end{equation}
Now if $\alpha\leq(1-\beta)/2$, then $\forall \lambda\in \Lambda_3'',$ $\min(\tilde{\mu}_1(\lambda),\tilde{\mu}_3(\lambda))=\min(\tilde{\mu}_2(\lambda),\tilde{\mu}_3(\lambda))=\min(\tilde{\mu}_1(\lambda),\tilde{\mu}_2(\lambda),\tilde{\mu}_3(\lambda))=\tilde{\mu}_3(\lambda)$, which leads to $T_2+T_3-T_4=0$. However, if $\alpha>(1-\beta)/2$, then $\forall \lambda\in \Lambda_3'',$ $\min(\tilde{\mu}_1(\lambda),\tilde{\mu}_3(\lambda))=\tilde{\mu}_1(\lambda)$, $\min(\tilde{\mu}_2(\lambda),\tilde{\mu}_3(\lambda))=\tilde{\mu}_2(\lambda)$ and $\min(\tilde{\mu}_1(\lambda),\tilde{\mu}_2(\lambda),\tilde{\mu}_3(\lambda))=\min(\tilde{\mu}_1(\lambda),\tilde{\mu}_2(\lambda))$, which leads to $T_2+T_3-T_4=\int_{\Lambda_3''}[\max(\tilde{\mu}_1(\lambda),\tilde{\mu}_2(\lambda))-\tilde{\mu}_3(\lambda)]d\lambda$. Adding $T_1$ to $T_2+T_3-T_4$ implies equations \eqref{w_S_projective} and \eqref{w_S_POVM}.

\textit{Range of generalized overlaps---} Both quantum and epistemic overlaps are minimum when both preparations are operationally distinguishable, that is, when $\bra{\psi_1}\psi_2\rangle=0$. This ensures $S^{Gam}_Q|_{\textrm{max}}=S^{Gam}_{\Lambda}|_{\textrm{max}}=1$, which makes $\omega_{\Lambda}|_{\textrm{min}}=\omega_Q|_{\textrm{min}}=0$.

Similarly, both overlaps are maximum when both preparations are completely indistinguishable, that is, when $\ket{\psi_1}=\ket{\psi_2}=\ket{\psi}$ up to an overall phase factor. This implies that for a particular measurement $\mathcal{M}$, the average reward is,
\begin{equation}
    S^{Gam}_{\mathcal{M}}=\frac{1-\beta}{2}+\left((\alpha+\beta)-\left(\frac{1+\beta}{2}\right)\right)p(3|\psi,\mathcal{M}).
\end{equation}
Thus, for the case when $\alpha\leq (1-\beta)/2$, $S^{Gam}_{\mathcal{M}}$ is maximized for that measurement for which $p(3|\psi,\mathcal{M})=0$ which implies that $S^{Gam}_Q|_{\textrm{min}}=(1-\beta)/2$ and consequently, $\omega_{Q}|_{\textrm{max}}=(1+\beta)$ whereas if $\alpha>(1-\beta)/2$ then $S^{Gam}_{\mathcal{M}}$ is maximized when $p(3|\psi,\mathcal{M})=1$ which implies that $S^{Gam}_Q|_{\textrm{min}}=\alpha$ and consequently $\omega_Q|_{\textrm{max}}=2(1-\alpha)$. Combining these we see that, $\omega_Q|_{\textrm{max}}=1-2\alpha-\beta+\min(1+\beta,2(\alpha+\beta))$. Again, since preparations are indistinguishable, their epistemic states are the same, and from \eqref{w_S_mod_rep}, we see that $\omega_{\Lambda}|_{\textrm{max}}=1-2\alpha-\beta+\min(1+\beta,2(\alpha+\beta))$ as well. This results in the range as shown in \eqref{range}.

\textit{Proof of Theorem \ref{thm_2}---} We first present the following lemma.
\begin{lemma} \label{Q_casino_thm}
    Consider any two pure quantum states $\ket{\psi_1}$ and $\ket{\psi_2}$, and their equal mixture $\rho=1/2(\ket{\psi_1}\!\bra{\psi_1}+\ket{\psi_2}\!\bra{\psi_2})$, such that for every value of $\alpha$ and $\beta$ in the range $[0,1]$ the following inequality holds,
    \begin{equation} \label{w_l_ineq}
    \begin{split}
        \frac{1}{2}\omega_{\Lambda}(\psi_1,\psi_2;\alpha,\beta)&\leq2(1+\beta)-(1-\alpha)D_Q(\psi_1,\psi_2)\\
        &-D_Q(\{\psi_1,(1+\beta)/2\},\{\rho,\alpha+\beta\})\\
        &-D_Q(\{\psi_2,(1+\beta)/2\},\{\rho,\alpha+\beta\}).
        \end{split}
    \end{equation}
    
\end{lemma}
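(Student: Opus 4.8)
\emph{Overall strategy.} The cleanest route is to pass through the ``max'' form of the generalized epistemic overlap established in the proof of Theorem~\ref{def_w_lambda}, where $S^{Gam}_{\Lambda}=\tfrac12\int_{\Lambda}\max(\tilde{\mu}_1,\tilde{\mu}_2,\tilde{\mu}_3)\,d\lambda-\beta=1-\tfrac12\omega_{\Lambda}$. This gives $\tfrac12\omega_{\Lambda}(\psi_1,\psi_2;\alpha,\beta)=(1+\beta)-\tfrac12\int_{\Lambda}\max(\tilde{\mu}_1,\tilde{\mu}_2,\tilde{\mu}_3)\,d\lambda$. Substituting into the claimed bound, I would show that the lemma is equivalent to a \emph{lower} bound on the ontic gambling reward, namely $(1-\alpha)D_Q(\psi_1,\psi_2)+D_Q(\{\psi_1,(1+\beta)/2\},\{\rho,\alpha+\beta\})+D_Q(\{\psi_2,(1+\beta)/2\},\{\rho,\alpha+\beta\})\le (1+\beta)+\tfrac12\int_{\Lambda}\max(\tilde{\mu}_1,\tilde{\mu}_2,\tilde{\mu}_3)\,d\lambda$.

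\emph{Reduction to a classical inequality.} Define the epistemic weighted distinguishability $D_{\Lambda}(\{\psi_i,w_i\},\{\psi_j,w_j\})=\int_{\Lambda}\max(w_i\mu(\lambda|\psi_i),w_j\mu(\lambda|\psi_j))\,d\lambda$. Exactly as in the proof of Theorem~\ref{def_w_lambda}, since every response scheme is a valid set of convex weights, one has $D_Q\le D_{\Lambda}$ for each of the three distinguishabilities. Because they all appear with nonnegative coefficients (using $\alpha,\beta\in[0,1]$), it suffices to prove the stronger statement obtained by replacing each $D_Q$ by $D_{\Lambda}$; the quantum version then follows from $D_Q\le D_{\Lambda}$. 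Next, using $\mu(\lambda|\rho)=\tfrac12(\mu(\lambda|\psi_1)+\mu(\lambda|\psi_2))$ (the convexity \eqref{convexity} flagged in Theorem~\ref{thm_2}), I rewrite the three epistemic distinguishabilities as half-integrals of pairwise maxima: $D_{\Lambda}(\psi_1,\psi_2)=\tfrac{1}{2(1+\beta)}\int_\Lambda\max(\tilde{\mu}_1,\tilde{\mu}_2)\,d\lambda$ and $D_{\Lambda}(\{\psi_i,(1+\beta)/2\},\{\rho,\alpha+\beta\})=\tfrac12\int_\Lambda\max(\tilde{\mu}_i,\tilde{\mu}_3)\,d\lambda$.

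\emph{The crux: a pointwise inequality.} The key observation is that $\tilde{\mu}_3=\gamma(\tilde{\mu}_1+\tilde{\mu}_2)$ with $\gamma=(\alpha+\beta)/(1+\beta)\in[0,1]$, while the coefficient of $\max(\tilde{\mu}_1,\tilde{\mu}_2)$ is $(1-\alpha)/(1+\beta)=1-\gamma$. After collecting terms and using $\int_\Lambda\tilde{\mu}_1\,d\lambda=\int_\Lambda\tilde{\mu}_2\,d\lambda=1+\beta$, the whole statement reduces to the pointwise inequality $(1-\gamma)\max(\tilde{\mu}_1,\tilde{\mu}_2)+\max(\tilde{\mu}_1,\tilde{\mu}_3)+\max(\tilde{\mu}_2,\tilde{\mu}_3)-\max(\tilde{\mu}_1,\tilde{\mu}_2,\tilde{\mu}_3)\le \tilde{\mu}_1+\tilde{\mu}_2$ for every $\lambda$. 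I would prove this by setting $a=\tilde{\mu}_1,\ b=\tilde{\mu}_2$ (WLOG $a\le b$) and $t=\tilde{\mu}_3=\gamma(a+b)$, and checking the three orderings $t\le a$, $a\le t\le b$, and $b\le t$; in each case the left side collapses to either $(1-\gamma)b+a$ or $b+\gamma a$, both bounded by $a+b$ precisely because $\gamma\in[0,1]$.

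\emph{Assembly and main obstacle.} Integrating the pointwise inequality, halving, and adding $(1+\beta)$ yields the ontological ($D_{\Lambda}$) inequality; invoking $D_Q\le D_{\Lambda}$ then delivers the lemma. I expect the only genuine obstacle to be discovering and verifying that pointwise inequality—especially recognizing that the coefficient identity $(1-\gamma)+\gamma=1$ is exactly what makes the three cases close—since every other ingredient (the max-form of $\omega_{\Lambda}$, the convexity of $\mu(\lambda|\rho)$, and $D_Q\le D_{\Lambda}$) is either already in the excerpt or entirely standard. The one subtlety worth stating explicitly is that the reduction to the $D_{\Lambda}$ statement is valid only because all three distinguishabilities enter with nonnegative weights, which is guaranteed by $\alpha,\beta\in[0,1]$.
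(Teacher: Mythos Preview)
Your argument is correct. It is, however, the max-side mirror image of the paper's min-side proof: the paper works with the decomposition $\omega_\Lambda=T_1+T_2+T_3-T_4$ in terms of pairwise and triple \emph{minimums}, lower-bounds $T_4$ via the product inequality $\min(\gamma_1 a_1,\gamma_2 a_2,\gamma_3 a_3)\ge \min_i(\gamma_i)\min_i(a_i)$, and then converts each residual $\int\min$ into a $D_Q$ term using $\int_\Lambda\min(w_1\mu_1,w_2\mu_2)\le w_1+w_2-D_Q$. Your route instead starts from the max-form $S^{Gam}_\Lambda=\tfrac12\int\max(\tilde\mu_1,\tilde\mu_2,\tilde\mu_3)-\beta$, replaces each $D_Q$ by the ontic $D_\Lambda=\int\max$, and closes with a case-checked pointwise inequality. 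The two are not just analogous but equivalent: substituting $\max(x,y)=x+y-\min(x,y)$ and the three-variable identity into your pointwise claim collapses it exactly to $\min(\tilde\mu_1,\tilde\mu_2,\tilde\mu_3)\ge \gamma\,\min(\tilde\mu_1,\tilde\mu_2)$ with $\gamma=(\alpha+\beta)/(1+\beta)$, which is precisely the paper's $T_4$ bound. What your presentation buys is that the crucial role of $\gamma\in[0,1]$ (and hence of $\alpha,\beta\in[0,1]$) is made fully transparent in the three-case check; what the paper's packaging buys is a one-line justification via a general $\min$ inequality rather than case analysis.
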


Consider the term $T_4$ as defined in \eqref{w_S_mod_rep}. The integrand has the form, $\min(\gamma_1a_1,\gamma_2a_2,\gamma_3a_3)$ where $\gamma_1=\gamma_2=(1+\beta)$ and $\gamma_3=(\alpha+\beta)$ while $a_1=\mu(\lambda|\psi_1)$, $a_2=\mu(\lambda|\psi_2)$ and $a_3=a_1+a_2$. For every $\lambda\in\Lambda$, $a_1,a_2$ and $a_3$ are non-negative numbers and so are $\gamma_1,\gamma_2$ and $\gamma_3$, thus $\min(\gamma_1a_1,\gamma_2a_2,\gamma_3a_3)\geq\min(\gamma_1,\gamma_2,\gamma_3)\min(a_1,a_2,a_3)$ holds. In this context,then, $\min(\gamma_1,\gamma_2,\gamma_3)=(\alpha+\beta)$, and $\min(a_1,a_2,a_3)=\min(a_1,a_2)$ since $a_3=a_1+a_2$ which allows us to write,
\begin{equation}
T_4\geq(\alpha+\beta)\int_{\Lambda}\min(\mu(\lambda|\psi_1),\mu(\lambda|\psi_2))d\lambda.
\end{equation}
Since the \textit{convexity of a mixed preparation is preserved in its epistemic state} (see \eqref{convexity}), we write $\tilde{\mu}_3(\lambda)=2(\alpha+\beta)\mu(\lambda|\rho)$ where $\rho=1/2(\ket{\psi_1}\!\bra{\psi_1}+\ket{\psi_2}\!\bra{\psi_2})$ and use the lower bound on $T_4$ as expressed above to present the following bound on $\omega_{\Lambda}(\psi_1,\psi_2;\alpha,\beta)$,
\begin{equation} \label{w_upper_epis}
    \begin{split}
        \omega_{\Lambda}\leq &(1-\alpha)\int_{\Lambda}\min(\mu(\lambda|\psi_1),\mu(\lambda|\psi_2))d\lambda\\
        &+2\int_{\Lambda}\min\left(\frac{(1+\beta)}{2}\mu(\lambda|\psi_1),(\alpha+\beta)\mu(\lambda|\rho)\right)d\lambda\\
        &+2\int_{\Lambda}\min\left(\frac{(1+\beta)}{2}\mu(\lambda|\psi_2),(\alpha+\beta)\mu(\lambda|\rho)\right)d\lambda\\
        &-2(\alpha+\beta).
    \end{split}
\end{equation}
The terms on the right-hand side of the above inequality can be replaced by operational quantities. That is so because, if one considers weighted distinguishability as defined in \eqref{weighted_dist} and expresses it in terms of epistemic states and response schemes, much like what was done with $S^{Gam}_Q$ in the proof of Theorem \ref{def_w_lambda}, then it can be shown that,
\begin{equation}
\begin{split}
    &D_Q(\{\phi_1,w_1\},\{\phi_2,w_2\})\leq w_1+w_2\\
    &-\int_{\Lambda}\min(w_1\mu(\lambda|\phi_1),w_2\mu(\lambda|\phi_2))d\lambda.
    \end{split}
\end{equation}
Using this inequality, one can replace the right-hand side of \eqref{w_upper_epis} with a linear combination of $D_Q(\psi_1,\psi_2)$, $D_Q(\{\psi_1,(1+\beta)/2\},\{\rho,(\alpha+\beta)\})$ and $D_Q(\{\psi_2,(1+\beta)/2\},\{\rho,(\alpha+\beta)\})$ which results in \eqref{w_l_ineq}. Once this inequality is established, it is quite straightforward to arrive at \eqref{diff_overlap} with $B_Q(\psi_1,\psi_2,\rho; \alpha, \beta)$ as defined in \eqref{b_label}. 

\textit{Proof of Theorem \ref{result}---} Here we describe the numerical SDPs programming techniques used to find the dimension-independent maximum quantum value of $B_Q(\psi_1,\psi_2,\rho;\alpha,\beta)$ which we denote with $B_Q(\alpha,\beta)$. Note that in Theorem \ref{result}, we have presented maximum quantum value of $B_Q(\psi_1,\psi_2,\rho;\alpha,\beta=1)$ denoted as $B_Q(\alpha)$ in \eqref{B_Q(alpha)}.

\textit{See-saw method for dimension-dependent lower bounds:---} Recall the expression of $B_Q(\psi_1,\psi_2,\rho;\alpha,\beta)$ as given by \eqref{b_label}. Notice that it can be expressed as a difference of two quantities, $B_Q(\psi_1,\psi_2,\rho;\alpha,\beta)=2(S^{Comm}_Q|_{\textrm{op}}-T)$ where $T=\big(S_Q^{Gam}+2\beta+1\big)$ and $S^{Comm}_Q|_{\textrm{op}}$ is the optimal success metric of the communication task, almost similar to the one given by \eqref{op_comm_task}, except that here we have not set $\beta=1$ yet, that is,
 \begin{equation} \label{op_comm_task_beta}
\begin{split}
    S_Q^{Comm}\big|_{\textrm{op}}=&(1-\alpha)D_Q(\psi_1,\psi_2)\\
    &+D_Q(\{\psi_1,(1+\beta)/2\},\{\rho,\beta+\alpha\})\\
    &+D_Q(\{\psi_2,(1+\beta)/2\},\{\rho,\beta+\alpha\}).
    \end{split}
\end{equation}
 
 So, to find the maximum quantum value of $B_Q(\psi_1,\psi_2,\rho;\alpha,\beta)$, we need to maximize $S^{Comm}_Q|_{\textrm{op}}$ and simultaneously minimize $S_Q^{Gam}$. Since \(S^{\mathrm{Gam}}_Q\) is itself a maximum over gambling strategies, the problem has a max–min structure. 

Let us consider the first term. Given two states (not necessarily pure) $\rho_1$ and $\rho_2$ and their equal mixture $\rho=(\rho_1+\rho_2)/2$, retrieving the maximum value of the first term forms a straightforward SDP problem,
\begin{equation} \label{primalD}
 \begin{aligned}
S^{Comm}_Q(\rho_1,\rho_2)|_{\textrm{op}}=\max_{\{M^y_i\}_{i,y}}\quad & \sum_{y=1}^3\sum_{i=1}^21 \tr( K^y_i M^y_i) \\
\text{subject to}\quad & \sum_{i=1}^2 M^y_i = \mathbbm{1}, \quad y=1,2,3,\\
& M^y_i \succeq 0,\quad i=1,2, \quad y=1,2,3,
\end{aligned}
 \end{equation}
where $K_1^1=(1-\alpha)\rho_1/2$, $K_2^1=(1-\alpha)\rho_2/2$, $K_1^2=\rho_1(1+\beta)/2$, $K_2^2=(1/2)(\beta+\alpha)(\rho_1+\rho_2)$, $K_1^3=\rho_2(1+\beta)/2$ and finally $K_2^3=K_2^2$. For the second term, we follow the approach of \cite{Tavakoli2022informationally}, and consider the dual of Quantum Gambling \eqref{dual}, so that our problem of finding $B_{Q}(\alpha,\beta)$ becomes
\begin{equation} \label{DualD}
 \begin{aligned}
&B_{Q}(\alpha,\beta)=\max_{\{\rho_1,\rho_2,Y\}}2\left( S^{Comm}_Q(\rho_1,\rho_2)|_{\textrm{op}}-\tr(Y)-2\beta-1\right) \\
&\text{subject to}\quad   Y \succeq \tfrac{1}{2} C_i,\quad i=1,2,3,
\end{aligned}
 \end{equation}
where $C_1=\rho_1-\beta\rho_2$, $C_2=\rho_2-\beta\rho_1$, and $C_3=\alpha(\rho_1+\rho_2)$.

For any given Hilbert space dimension, \eqref{primalD} and \eqref{DualD} give a direct way to lower bound $B_Q(\alpha,\beta)$ via an alternating SDP. Specifically, starting out with random states (mixed in general) $\rho_1$ and $\rho_2$, we find optimal POVMs by solving \eqref{primalD}, then fixing the optimal POVMs, we solve \eqref{DualD}, retrieving optimal $\rho_1,\rho_2$, and $Y$, and supplying them back as constants to \eqref{primalD}. The program iterates untill convergence and yields a lower bound 
\begin{equation} \label{seesawLB}
    B_{Q_{L}}(\alpha,\beta)\leq B_{Q}(\alpha,\beta).
\end{equation}
Moreover, the program also returns the optimal states $\rho_1,\rho_2$ and measurements $\{M^y_i\}$. We find that for all values of $\alpha,\beta$, qubits reach the maximum gap and the states $\rho_1,\rho_2$ are pure, that is, they satisfy $\rho_1^2=\rho_1,\rho_2^2=\rho_2$ (up to machine precision). We find that changing $\beta$ does not alter the optimal states and $\beta=1$ provides the maximum $B_{Q_{L}}(\alpha,\beta)$. Hence, we plot $B_{Q_{L}}(\alpha)=B_{Q_{L}}(\alpha,\beta=1)$ and the optimal Bloch angle $\theta$ between the optimal states against $\alpha$ in FIG. \ref{fig:op_bound}. Finally, we find that absolute maximum gap $B_{Q_{L}}(\alpha)=0.0639$ is attained for $\alpha=0.7124$
However, this approach yields only dimension-dependent lower bounds on $B_Q(\alpha,\beta)$. 

\textit{Dimension independent upper-bounds via tracial non-commuting polynomial optimization:---} To find dimension-independent upper bounds, we formulate a tracial non-commuting polynomial optimization problem following \cite{Tavakoli2022informationally,Chaturvedi2021characterising}. First, following \cite{Chaturvedi2021characterising}, we consider $3$ moment matrices $\Gamma_{\rho_1},\Gamma_{\rho_2},\Gamma_{Y}$, hinged on semidefinite variables $\rho_1,\rho_2,Y$, with a common level-$2$ operator list $\mathcal{O}\equiv\{\mathbbm{1},\{M^y_1\}_{y\in\{1,2,3\}},\{M^y_1M^{y'}_1\}_{y\neq y'\in\{1,2,3\}}\}$, such that their entries are defined as, 
\begin{equation}
[\Gamma_{X}]_{O_i,O_j}=\tr(XO^\dagger_iO_j)    
\end{equation}
for all $X\in\{\rho_1,\rho_2,Y\}$ and $O_i,O_j\in\mathcal{O}$. We then impose normalization constraints, 
\begin{equation} \label{normCon}
[\Gamma_X]_{\mathbbm{1},\mathbbm{1}}=1,     
\end{equation}
for all $X\in\{\rho_i,\rho_j\}$, and constraints encoding the projectivity of the measurement operators including, 
\begin{equation}\label{projCon}
[\Gamma_{X}]_{\mathbbm{1},M^y_1}=[\Gamma_{X}]_{M^y_1,M^y_1},
\end{equation}
for all $y\in\{1,2,3\}$ $X\in\{\rho_1,\rho_2,Y\}$. Additionally, following \cite{Tavakoli2022informationally} we impose the following constraints implied by semidefinite inequality constraints in \eqref{DualD}, specifically,
\begin{equation}
    \Gamma_{Y}\succeq \frac{1}{2}C''_i,
\end{equation}
for all $i\in\{1,2,3\}$, where $C''_1=\Gamma_{\rho_1}-\beta\Gamma_{\rho_2}$, $C''_2=\Gamma_{\rho_2}-\beta\Gamma_{\rho_1}$ and $C''_3=\alpha(\Gamma_{\rho_1}+\Gamma_{\rho_2})$.

To impose the pure state constraints $\rho^2_1=\rho_1$ and $\rho^2_2=\rho_2$, we additionally include a separate moment matrix $\Gamma$ with an operator list $\mathcal{O}_{\Gamma}\equiv\{\mathbbm{1},\rho_1,\rho_2,\{M^y_1\}_{y\in\{1,2,3\}},\{\rho_i\rho_j\}_{i,j\in\{1,2\}},\{\rho_iM^y_1\}_{i\in\{1,2\},y\in\{1,2,3\}}\}$ with entries $[\Gamma]_{O_i,O_j}=\tr(O^\dagger_iO_j)$ for all $O_i,O_j\in \mathcal{O}_{\Gamma}$, and impose the constraints, 
\begin{equation}\label{pureCon}
    [\Gamma]_{\mathbbm{1},\rho_{i}}=[\Gamma]_{\rho_{i},\rho_{i}}=1,
\end{equation}
for all $i\in\{1,2\}$. We further impose additional constraints interconnecting the entries among the moment matrices. 

Finally, we maximize the following (linear on the entries of moment matrices) objective function,
\begin{align} \nonumber
    &(1+\beta)[\Gamma_{\rho_1}]_{\mathbbm{1},M^2_1}+2(\alpha+\beta)\big(1-\frac{[\Gamma_{\rho_1}]_{\mathbbm{1},M^2_1}+[\Gamma_{\rho_2}]_{\mathbbm{1},M^2_1}}{2}\big)+ \\ \nonumber
    &(1+\beta)[\Gamma_{\rho_2}]_{\mathbbm{1},M^3_1}+2(\alpha+\beta)\big(1-\frac{[\Gamma_{\rho_1}]_{\mathbbm{1},M^3_1}+[\Gamma_{\rho_2}]_{\mathbb{I},M^3_1}}{2}\big)\\ 
    &(1-\alpha)\big([\Gamma_{\rho_1}]_{\mathbbm{1},M^1_1}-[\Gamma_{\rho_2}]_{\mathbbm{1},M^1_1}+1\big)-2([\Gamma_{Y}]_{\mathbbm{1},\mathbbm{1}}+2\beta+1)
\end{align}
Finally, imposing the semidefinite positivity of the moment matrices, $\Gamma_{\rho_1}\succeq 0,\Gamma_{\rho_2}\succeq 0,\Gamma_{Y}\succeq 0,\Gamma\succeq 0$, the program returns dimension independent upper-bounds $B_{Q_{UB}}(\alpha,\beta)$, such that, 
\begin{equation} \label{ncpopUB}
B_{Q}(\alpha,\beta)\leq B_{Q_{UB}}(\alpha,\beta).
\end{equation}
We plot the upper bounds $B_{Q_{UB}}(\alpha)=B_{Q_{UB}}(\alpha,\beta=1)$ in FIG. \ref{fig:op_bound}. Whenever the upper bounds from the tracial non-commuting polynomial optimization \eqref{ncpopUB} match the lower bounds from the seesaw method \eqref{seesawLB}, we retrieve the maximum quantum value such that $B_{Q_{UB}}(\alpha)=B_{Q_{L}}(\alpha)=B_{Q}(\alpha)$. In particular, for all values of $\alpha\in(\approx0.49,1)$ $B_{Q_{UB}}(\alpha)=B_{Q_{L}}(\alpha)=B_{Q}(\alpha)$ (up to machine precision).

        -
\end{document}